\documentclass[pdflatex,sn-chicago]{sn-jnl}
\usepackage{graphicx}%
\usepackage{multirow}%
\usepackage{amsmath,amssymb,amsfonts}%
\usepackage{amsthm}%
\usepackage{mathrsfs}%
\usepackage[title]{appendix}%
\usepackage{xcolor}%
\usepackage{textcomp}%
\usepackage{manyfoot}%
\usepackage{booktabs}%
\usepackage{algorithm}%
\usepackage{algorithmicx}%
\usepackage{algpseudocode}%
\usepackage{listings}%

\theoremstyle{thmstyleone}%
\newtheorem{theorem}{Theorem}
\newtheorem{proposition}[theorem]{Proposition}%

\theoremstyle{thmstyletwo}%

\theoremstyle{thmstylethree}%

\begin{document}

\title[Existence of Physical Functions]{Experiments, Computability, and the Existence of Physical Functions}

\author*[1]{\fnm{Isaac} \sur{P\'erez Castillo}}\email{iperez@izt.uam.mx}

\affil*[1]{\orgdiv{Departamento de F\'isica}, \orgname{Universidad Autónoma Metropolitana-Iztapalapa}, \orgaddress{\street{San Rafael Atlixco 186}, \city{Ciudad de México}, \postcode{09340}, \state{Ciudad de México}, \country{Mexico}}}

\abstract{Experimental science usually relies on laboratory procedures that, after finitely many steps, terminate with numerical reports on physical quantities. This paper argues that such procedures can be understood as algorithmic once the protocol, background conditions, and reporting rules are fixed. Assuming an explicit physical Church--Turing bridge principle, a reproducible experiment therefore computes a map from admissible inputs to outputs, and the corresponding function exists in the sense appropriate to those outputs. Furthermore, computable analysis allows us to explain why this conclusion is compatible with finite-precision measurement since in this case what matters is a systematic approximation to a requested accuracy, not the production of exact real numbers in a single step. Neither  protocol dependence nor stochasticity undermines the existence claim. Rather, they specify which map is realized by a given protocol and what additional assumptions are required for stronger claims about a single protocol-independent quantity. The paper therefore separates three questions that are often conflated: whether the function exists, whether it is computable, and when results obtained under different protocols may be treated as measurements of the same quantity.}

\keywords{Experimental protocols, Physical Church--Turing thesis, Computable analysis, Philosophy of measurement}

\maketitle

\section{Introduction}
\label{sec:intro}
A colleague in chemistry once asked me, during a very interesting lunch discussion about his work, whether the solubility function exists. He was of course not questioning whether chemists can measure solubility, since laboratories do this routinely. His worry was conceptual. The reported value depends on solvent, temperature, pressure, equilibration criteria, sample preparation, and readout conventions, so it was not obvious to him what the supposed function could be. He was implicitly imagining functions in a much stronger mathematical sense--as continuous maps between already well-defined spaces-- and under that picture the lack of a clean domain and codomain made the very existence of a solubility function look doubtful. This work argues that the answer is nevertheless straightforward: once the experimental conditions and the reporting rule are fixed, the experiment itself can be understood as an algorithm that naturally shows that the function exists as it is computed.

In this scenario, the reason for the existence question does not require an inflated conception of what a function is. For the purposes of this paper, a function is simply a mapping that assigns one output to each admissible input. One does not need, at the outset, a protocol-independent law already presented as a continuous map between well-behaved spaces. One only needs enough structure to say what counts as an admissible input, what counts as an output and what experimental protocol connects them. Under fixed experimental conditions, that structure is already supplied by the actual practice of carrying out the experiment. Thus, the problem is not whether some abstract mapping can be written down strictly in set-theoretic terms. The problem is whether a concrete experimental procedure fixes a target sharply enough for measurement, explanation, and prediction.

The key to the argument is to look at what an experiment actually is. An experiment is a finite and reproducible sequence of steps carried out to obtain a numerical report. One prepares a sample, fixes the relevant conditions, executes a prescribed procedure and, after finitely many steps, records a result. In the context of computability theory, this is an effective procedure and hence an algorithmic one. Given an admissible input under fixed background conditions, it terminates -- it halts -- with an output. This is the structure needed for the existence claim. Once the bridge principle is added, the experiment is an algorithm that computes a map from inputs to reported outputs.

Of course the previous claim must rely on a thesis to pass from an effective laboratory procedure to the existence of a function. This paper adopts an explicit physical Church--Turing bridge principle connecting finitely specified, reproducible laboratory procedures with Turing-computable maps \citep{Turing1937ComputableNumbers,Copeland2020ChurchTuringSEP,Piccinini2011PhysicalCTT}. Under this thesis, the map realized by the fixed experiment is computable in the sense appropriate to its outputs.

Note that neither arguments about finite precision nor about the stochastic nature of experiments undermine this conclusion. Experimental practice returns finite digits, intervals, and uncertainty statements rather than exact real numbers, but computable analysis is built precisely for that setting. A real-valued quantity is computable when there is a procedure that produces approximations to any requested accuracy \citep{Weihrauch2000ComputableAnalysis,BrattkaHertlingWeihrauch2008Tutorial}. The relevant question is therefore not whether a laboratory protocol outputs an exact real number in a single run, but whether it supports systematic finite approximations under controlled refinement. The stochastic case requires a corresponding distributional formulation, introduced below.

This way of framing the issue is consistent with familiar lessons from measurement theory and metrology. What is measured depends on how the task is specified, modeled, and reported \citep{Bridgman1927Logic,KrantzLuceSuppesTversky1971FOM1,JCGM2002012,JCGM1002008}, and recent discussion has emphasized that the legitimacy of a measurement target cannot always be taken for granted independently of practice \citep{Zhao2023MeasuringNonexistent}. But those issues should enter as qualifications to the procedure, not as the burden of the opening argument. To state the point simply, the main claim is that a fixed experiment secures the existence of the relevant function once it is treated, under the bridge principle, as computing a map.

The paper does not claim that every idealized physical magnitude is computable, that every experimentally accessible quantity is protocol independent, or that a general physical Church--Turing thesis has thereby been established. Its claim is narrower and sharper. A reproducible experiment is a finite procedure for obtaining a number. For that reason, it already defines a function at the level of its reports, and under the stated bridge principle that function is computable. The next section, Section~\ref{sec:experiments-effective}, clarifies the concepts of functions and experiments. The following sections state the bridge principle and the existence of the computed map, explain the finite-precision setting, introduce protocol dependence and the stochastic case only where they become necessary, and finally delimit the scope and consequences of the result.

\section{Functions and Experiments as Effective Procedures}
\label{sec:functions-experiments}
\label{sec:experiments-effective}
Let us start with a minimal notion of function. A function is a rule that assigns outputs to admissible inputs. Nothing in this requires continuity, differentiability, topology or some privileged real-valued law already detached from the experimental procedure by which values are obtained. Of course, those additional structures may become important later, but they are not required to ask whether a laboratory practice gives rise to a map. What is required is a set of admissible inputs, a set of possible outputs, and a procedure that connects the former to the latter.

In an experimental setting, these elements are fixed by the laboratorist's practice. Indeed, the admissible input is not an arbitrary object in the world, but an object that has been prepared, described, and controlled under a protocol. The measurement output is not an unconstrained mathematical value, but a report produced according to a stated rule. Therefore, for a fixed protocol $P$, background conditions $C$ and reporting rule $R$, an admissible input encoded with a finite description $x$ will result in a record which will be converted into a finite report $y$ via a finite sequence of prescribed operations carried out under $C$. In this sense a laboratory procedure has the form of an input--output procedure. It may be useful to write this structure explicitly as:
\begin{equation}
  x \overset{E}{\longmapsto} r \overset{R}{\longmapsto} y \,. 
\end{equation}
Here $E$ denotes the execution of the experimental protocol $P$ under some background conditions $C$, while $R$ corresponds to the reporting rule. The point of writing down explicitly this map is neither to trivialize nor to idealize experiments, as they are very complex and elaborate endeavors, but rather to emphasize once again that a fixed experiment receives admissible inputs, runs through finitely many prescribed steps, terminates according to its stopping rule, and produces a finite report.

Note that termination is essential. A laboratory protocol is not simply a physical process that evolves in time. It also includes a rule for when the run is complete. The stopping rule may be an equilibration criterion, a fixed integration time, a specified number of replicates, a calibration threshold, or a quality-control condition. If the stopping rule is not met, the run may fail or the particular input may fall outside the set of admissible inputs of the protocol. For this reason the map to be analyzed is not a map on every mathematically imaginable input, but a map realized on the set of admissible inputs for which the protocol is specified and successfully terminates.

This way of formulating the experiment is consistent with standard metrological practice. The International Vocabulary of Metrology (VIM) treats the specification of the measurand as part of the measurement task \citep{JCGM2002012}, and the Guide to the Expression of Uncertainty in Measurement (GUM) treats the measurement result as inseparable from the model, input quantities, and reporting conventions used to obtain it \citep{JCGM1002008,JCGMGUM62020}. This is also consistent with structural accounts of measurement, according to which measurement is not a mere readout but a partly empirical and partly theoretical process organized by background knowledge and procedural structure \citep{MariCarboneGiordaniPetri2017}. It is understood and agreed upon that the protocol, background conditions, and reporting rule are not external decorations added after the fact. They are an integral part of what fixes the input--output procedure whose computability will be the central point to be considered later.

Going back to the anecdote with my chemist colleague, the solubility example illustrates the point. Once the solvent, temperature, pressure, equilibrium criterion, sample preparation, and reporting convention are fixed, the input is no longer the vague object "a substance" considered in abstraction. It is a finitely described admissible sample under specified conditions. The protocol determines what is done to that sample. The apparatus and readout produce a finite record. The reporting rules convert that record into a numerical report. At this stage, the important point is not yet that the resulting map is computable. The important point is that the experiment has a clear structure of a finite effective procedure from admissible inputs to reported outputs.

This formulation also avoids a common overstatement. The claim is not that every physical process computes simply because one can describe it mathematically via a function. A laboratory experiment has a much more disciplined structure. Its inputs are prepared according to rules, its operations are publicly specifiable, its stopping conditions are part of the protocol and its outputs are read through established reporting conventions. These features are precisely what distinguish an experimental procedure from an arbitrary physical evolution described after the fact \citep{Piccinini2025ComputationPhysicalSEP,Anderson2024}.

Reproducibility is also part of the relevant structure, although it need not mean that every run produces exactly the same raw record. In deterministic cases, repeated executions under the same conditions are expected to return the same report within the stated convention. In stochastic cases, reproducibility may instead mean that repeated executions generate a stable distribution of records or reports. The stochastic case therefore changes the form of the output object, but it does not remove the procedural structure. There is still an admissible input, a fixed protocol, a stopping rule, and a report generated according to an explicit convention. Thus stochasticity does not diminish the main goal of this paper and for clarity the fuller treatment of stochasticity is postponed until Section~\ref{sec:protocol-stochastic}.

To wrap up the main goal of this section: a fixed experiment can be understood as a finite halting procedure from admissible finite inputs to finite reports. This does not yet prove the main existence claim since the step from effective laboratory procedure to computability requires an explicit Church--Turing bridge principle. We proceed to introduce it in the next section. Once it is in place, an experiment can be treated as a finite halting procedure computing an output. Therefore, there exists a computable function at the level of its outputs.

\section{Bridge Principle and Existence of the Computed Map}
\label{sec:bridge-main-claim}
In the previous section I illustrated that a fixed experiment can be thought of as a finite halting procedure from admissible inputs to finite reports. This already gives us enough indication that my chemist colleague's original worry is misdirected. Still, to make the argument scientifically grounded and not merely rhetorical, an extra ingredient must be made explicit. The fact that a laboratory protocol is effective in practice does not by itself amount to a theorem that its report map is Turing-computable. To reach that conclusion one needs a bridge principle.

Let us recall the classical Church--Turing thesis, which will be the natural starting point for our bridge principle. In Turing's original setting, the issue is whether an effective method can be captured by a Turing computation \citep{Turing1937ComputableNumbers}. One must recall that, as Copeland emphasizes, this is not a theorem in the sense of pure mathematics, but a thesis about the correct formal analysis of effective calculability \citep{Copeland2020ChurchTuringSEP}. A laboratory protocol, however, is not a pencil-and-paper calculation. It is a physically embodied procedure carried out under controlled conditions and coupled to an explicit reporting rule. For this reason, the step from effective laboratory procedure to computable function is not automatic. We thus require a physical version of the Church--Turing idea, stated with a scope narrow enough to match the actual case at hand \citep{Piccinini2011PhysicalCTT,Schmitz2023EpistemicallyUseful}.

The bridge principle that we state below must be restricted in scope. This paper does not assume that every physical process is Turing-computable, nor that arbitrary physical evolution can always be treated as a computation. The relevant claim concerns laboratory procedures of a specific kind: procedures that are finitely specified, reproducible, equipped with admissible inputs and stopping rules, and connected to finite reporting conventions. For such procedures, the bridge principle says that their input-output behavior can be represented by a Turing-computable procedure. This is the only physical Church--Turing assumption needed here \footnote{The idea of scope limiting in this context is obviously not new and can be found, for instance, in Gandy's analysis that derives Turing computability only after imposing explicit constraints on mechanisms, or Deutsch's formulation of a physical simulation principle rather than inferring computability from physical realizability alone \citep{Gandy1980ChurchThesisMechanisms,Deutsch1985QuantumTheoryCT}.}. This restricted focus also connects the present paper with work in computability theory that models measurement as an algorithmically controlled interaction with physical equipment, while differing from that literature in using this framework to answer the existence question for protocol-level report maps \citep{BeggsCostaTucker2014,SkapinakisCosta2022}.
 
For this reason, I will use the following principle:

\medskip
\noindent\emph{Physical Church--Turing bridge principle for laboratory protocols.}  
For every fixed finitely specified, reproducible laboratory protocol \((P,C,R)\), there exists a Turing-computable procedure \(S_{P,C,R}\) such that, for every admissible finite input description \(x\) for which the laboratory protocol halts with a valid report, \(S_{P,C,R}(x)\) halts and returns exactly the same finite report as the protocol executed under the same background conditions and reporting rule.
\medskip

It is important to emphasize again that this principle is deliberately restricted. It does not say that every physically possible process is Turing-computable. It does not rule out speculative models of hypercomputation, nor does it claim that every continuum idealization used in physics is effectively simulable. Its force is narrower and more practical. It says only that ordinary laboratory protocols, understood as finite and reproducible procedures with explicit reporting conventions, do not rely on hidden hypercomputational resources at the level of their operational behaviour. If one day physics were to produce a reproducible experiment whose input-output behaviour systematically outruns all Turing computation, then this principle would fail. The present argument is therefore conditional in exactly the way announced thus far.

Once this bridge principle is stated, the main claim becomes very simple. Using the notation introduced in the previous section, the protocol executes the following:
\begin{equation}
    x \overset{E}{\longmapsto} r \overset{R}{\longmapsto} y\,,
\end{equation}
with the relevant report map being $F=R\circ E$. At this stage, the codomain is not yet, if ever, the set $\mathbb{R}$ in the idealized sense. It is the set of finite report objects actually returned by the protocol, such as rational values, intervals, finite strings with uncertainty annotations, or other finite encodings accepted by the practice. Let  $\mathrm{Rep}_{\mathrm{fin}}$ denote that set. Let $D$ denote the class of admissible finite input descriptions for which the protocol halts with a valid report. The central point of this paper can now be stated directly:

\begin{proposition}[Existence of the computed map]
Fix a laboratory protocol \((P,C,R)\) satisfying the physical Church--Turing bridge principle above. Let \(D\) be the set of admissible finite input descriptions for which the protocol terminates with a valid report, and let \(\mathrm{Rep}_{\mathrm{fin}}\) be the corresponding set of finite report objects. Then there exists a computable map
\begin{equation}
  F : D \to \mathrm{Rep}_{\mathrm{fin}}  
\end{equation}
such that, for every \(x \in D\), the value \(F(x)\) is exactly the report returned by the laboratory protocol on input \(x\).
\end{proposition}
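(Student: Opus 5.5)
The argument will be a direct unpacking of the definitions followed by a single application of the physical Church--Turing bridge principle. No further machinery is needed. There are three steps: show that $F$ is a well-defined function, identify a Turing procedure that computes it, and check that the two agree on all of $D$.

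\emph{Step 1: $F = R\circ E$ is a well-defined function on $D$.} Take $x\in D$. By definition of $D$, executing $P$ under $C$ terminates with a record $r$, and $R$ turns $r$ into a valid finite report $y\in\mathrm{Rep}_{\mathrm{fin}}$. We set $F(x)=y$. The thing to check is single-valuedness, and here I will use the reproducibility clause of the protocol. In the deterministic reading adopted in this section, repeated executions under the same $(P,C,R)$ return the same report within the stated convention. So "the report returned by the protocol on input $x$" is determinate, and $F$ is a genuine map $D\to\mathrm{Rep}_{\mathrm{fin}}$. Stochastic protocols are set aside until Section~\ref{sec:protocol-stochastic}; there the codomain would be replaced by finite distributional reports.

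\emph{Step 2: invoke the bridge principle.} Since $(P,C,R)$ is finitely specified and reproducible, the bridge principle supplies a Turing-computable procedure $S_{P,C,R}$. For every admissible finite input $x$ on which the protocol halts with a valid report, that is, for every $x\in D$, $S_{P,C,R}(x)$ halts and outputs exactly the same finite report. Hence $S_{P,C,R}(x)=F(x)$ for all $x\in D$. Finite inputs and finite reports are encodable as finite strings, so this says $F$ is the restriction to $D$ of the partial computable function computed by $S_{P,C,R}$.

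\emph{Step 3: conclude, and flag the main obstacle.} Computability of $F$ will be meant in the standard sense for a map defined on a domain $D$: some Turing machine halts on every $x\in D$ with output $F(x)$. No claim is made about what the machine does off $D$, and no claim that $D$ is decidable. I expect this to be the only delicate point. The proposition must not be read as asserting that one can effectively recognise admissibility, because the bridge principle gives no such guarantee. I would therefore state this convention explicitly in the proof. The other subtlety, well-definedness under repeated runs, is handled by the reproducibility hypothesis rather than by any calculation. With both points made explicit, the existence claim follows immediately: $F$ exists as the report map and is computed by $S_{P,C,R}$.
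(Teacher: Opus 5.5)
Your proposal is correct and is essentially the paper's proof: you invoke the bridge principle to obtain $S_{P,C,R}$ and identify $F$ with its outputs on $D$. The differences are cosmetic. You define $F$ from the protocol and get single-valuedness from reproducibility, whereas the paper defines $F(x)$ as the output of $S_{P,C,R}(x)$ and gets single-valuedness from that procedure; your explicit convention that computability on $D$ says nothing about behaviour off $D$, or about deciding $D$, is a sensible sharpening of the paper's later remark that $F$ is a partial computable function with domain $D$.
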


The proof of this proposition is now clear:

\begin{proof}
By Section~\ref{sec:functions-experiments}, a fixed laboratory protocol is a finite halting procedure from admissible inputs to finite reports. The bridge principle states that there exists a Turing-computable procedure \(S_{P,C,R}\) which reproduces that same report behavior on every admissible input. For each \(x \in D\), define \(F(x)\) to be the output returned by \(S_{P,C,R}(x)\). Since \(S_{P,C,R}\) is computable, the map \(F\) is computable. Since \(S_{P,C,R}\) returns one definite report for each admissible input \(x\), the map \(F\) is a function in the minimal mathematical sense. This is exactly the existence claim at issue.
\end{proof}

The logical point is therefore straightforward. The experiment does not first presuppose a function whose existence must be justified elsewhere. Once the protocol is recognized as a halting effective procedure and the bridge principle is granted, the procedure computes a map from admissible inputs to reports. This is why the corresponding function must exist. The existence question is then not answered by some abstract metaphysical argument detached from practice. It is answered by the fact that the experiment, under the stated assumptions, computes a map.

If one prefers the standard language of computability theory, the same point may be restated by saying that the report map is a partial computable function on the ambient space of finite input descriptions, with domain \(D\). The word ``partial'' is not a defect. It simply records the familiar fact that a protocol need not apply to every imaginable input and need not return a valid report when its own stopping or quality criteria are not met. On the intended domain of admissible inputs, however, the protocol defines a computable function. That is all the existence claim requires.

It is important to clarify what we have not yet shown. The proposition does not yet identify the outputs with exact real numbers. It concerns the finite reports that the experiment actually returns. Nor does it yet say that different protocols compute the same quantity, or that stochastic protocols must be forced into a point-valued deterministic form. Those are further questions that matter, but they do not alter the core conclusion reached here.

The chemist's original worry can now be answered in the precise form needed for this paper. Once the experimental conditions, the admissible inputs, and the reporting rule are fixed, the experiment is a finite halting procedure. Under the physical Church--Turing bridge principle, that procedure computes a map. For that reason, the relevant function must exist. 

In the next section I explain how this conclusion is compatible with finite-precision measurement and how computable analysis provides the right language for moving from finite reports to real-valued idealizations when such idealizations are warranted.

\section{Finite Precision and Computable Analysis}
\label{sec:computable-analysis}
In the previous section we have established the existence of a computed map at the level of finite reports. This is the level at which an experiment directly operates. A laboratory run does not return an exact real number written with infinitely many digits. It returns a finite record, and the reporting rule converts that record into a finite numerical report, often together with an uncertainty statement and a description of the conditions under which the report is valid \citep{JCGM1002008,JCGMGUM62020}. One may think that this situation threatens the idea that experiments give access to real-valued quantities but, in fact, this is precisely the situation that requires the use of computable analysis to explain why the previous discussion applies to real-valued quantities.

Computable analysis does not require a procedure to print an exact real number in one step. A real number is computable when there is a finite procedure that, for each requested accuracy, produces a rational approximation within that accuracy. In a standard representation, a procedure computes a real number $y$ when, on input $n$, it returns a rational number $q_n$ such that
\begin{equation}
    |q_n-y|\leq 2^{-n}\,.
\end{equation}
Similarly, a real-valued function is computable when there is a procedure that, given the input and the requested accuracy $n$, returns such an approximation to the corresponding value of the function \citep{Weihrauch2000ComputableAnalysis,BrattkaHertlingWeihrauch2008Tutorial}. In this way, the computability of real-valued quantities is not opposed to finite precision but is rather formulated through finite precision.

Of course this point fits naturally with standard experimental practice. Suppose that a fixed experiment $(P,C,R)$ admits a finite refinement setting $k$. The parameter $k$ corresponds to a finite laboratory instruction internal to the protocol rather than an abstract request for the $k$-th digit of reality. Thus $k$ may specify the number of replicate measurements, an integration time, a calibration routine, an equilibration window, or a more stringent reporting rule. For each admissible input $x$ and refinement setting $k$, the protocol produces a finite report $F(x,k)=R(E(x,k))$. In the deterministic case, such a report may be represented as a rational value $q_{x,k}$, or as a pair $(q_{x,k},\Delta_{x,k})$, where $\Delta_{x,k}$ is a stated tolerance or uncertainty associated with the report.

Thus a real-valued idealization enters only when these finite reports are connected by a valid refinement rule. An experiment does not automatically deliver a unique real-valued function merely because its reports contain numbers. What is needed is a rule saying how to ask for more precision and what the resulting report warrants. If there is a real-valued quantity $f(x)$, and if the protocol supplies an effective way of choosing a refinement setting $k=k(x,n)$ such that the reported value $q_{x,k(x,n)}$ satisfies
\begin{equation}
    |q_{x,k(x,n)}-f(x)|\leq 2^{-n}\,,
\end{equation}
then the finite reports constitute a computable name for $f(x)$. In this case the experiment does not output the exact real number $f(x)$ but instead computes $f(x)$ by producing an arbitrarily accurate finite approximation of it.

This can be stated as a direct consequence of the bridge principle:

\begin{proposition}[Finite-precision realization of a real-valued computation]
Fix a deterministic laboratory protocol $(P,C,R)$ satisfying the physical Church--Turing bridge principle of Section~\ref{sec:bridge-main-claim}. Suppose that, for each admissible finite input $x$ and requested accuracy $n$, there is a computable refinement rule $k(x,n)$ such that the protocol halts at refinement setting $k(x,n)$ and returns a rational report $q_{x,k(x,n)}$ satisfying
\begin{equation}
 |q_{x,k(x,n)}-f(x)|\leq 2^{-n}   
\end{equation}
for some real-valued target $f$. Then $f$ is computable on its intended domain in the sense of computable analysis.
\end{proposition}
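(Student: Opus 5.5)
The plan is to build an explicit Turing machine that, on input $(x,n)$, returns a rational within $2^{-n}$ of $f(x)$. This is exactly the definition of computability of $f$ recalled at the start of this section. The construction composes two computable pieces: the refinement rule $k(x,n)$ and the simulating procedure supplied by the bridge principle.

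First, I will treat the refined protocol as a single fixed laboratory protocol. Its inputs are the pairs $(x,k)$, with $k$ a finite refinement setting internal to the protocol. Its admissible domain is the set of pairs at which it halts with a valid rational report. Since $k$ is a finite laboratory instruction, the pair $(x,k)$ is still a finite input description, so the proposition on the existence of the computed map applies. It yields a computable map $F(x,k)=q_{x,k}$ on that domain, realized by the simulating procedure $S_{P,C,R}$.

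Second, I will define the machine $M$. On input $(x,n)$, $M$ computes $k(x,n)$ using the computable refinement rule, then runs $S_{P,C,R}(x,k(x,n))$ and outputs the resulting rational. By hypothesis the protocol halts with a valid report at $(x,k(x,n))$, so this pair lies in the domain, and $S_{P,C,R}$ halts there and returns exactly $q_{x,k(x,n)}$. Since a composition of computable maps is computable, $M$ is a Turing machine. Its output satisfies $|M(x,n)-f(x)|\le 2^{-n}$ for every admissible $x$ and every $n$. For each $x$, the sequence $(M(x,n))_n$ is therefore a Cauchy name of $f(x)$ in the standard representation, produced uniformly in $x$. That is computability of $f$ on its intended domain of admissible finite inputs.

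The main obstacle is the first step. The bridge principle as stated applies to a fixed protocol with fixed inputs, so I must justify that varying $k$ does not yield a family of different protocols, each with its own possibly non-uniform simulator. I will handle this by building $k$ into the input description, exactly as the text above describes $k$ as "a finite laboratory instruction internal to the protocol". One protocol then covers all refinement levels, and one simulator $S_{P,C,R}$ serves uniformly. This uniformity is what makes $f$ computable, rather than merely making each value $f(x)$ a computable real.

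A smaller point is the domain. Since the inputs $x$ are finite descriptions, the relevant notion is computability of a function from a discrete space into $\mathbb{R}$, which is exactly what $M$ provides. No continuity or real-valued inputs are needed.
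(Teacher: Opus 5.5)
Your proof is correct and follows the same route as the paper's: compute $k(x,n)$, run the bridge-principle simulator at that refinement setting, and output the rational report, which lies within $2^{-n}$ of $f(x)$ by hypothesis. Your extra step of folding $k$ into the input description, so that one simulator $S_{P,C,R}$ serves all refinement levels uniformly, makes explicit a point the paper's proof leaves implicit when it says the protocol ``at that refinement setting'' is represented by a Turing-computable procedure; this step fits the paper's own notation $F(x,k)=R(E(x,k))$.
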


It is worth giving the proof.

\begin{proof}
Given $x$ and $n$, compute the refinement setting $k(x,n)$. By the bridge principle, the laboratory protocol at that refinement setting is represented by a Turing-computable procedure that returns the same finite report as the experiment. Run that procedure and extract the rational value $q_{x,k(x,n)}$. By assumption, this rational value is within $2^{-n}$ of $f(x)$. Hence there is an effective procedure that, given $x$ and $n$, returns a rational approximation to $f(x)$ with the requested accuracy. This is exactly the computable-analysis criterion for computing $f$.
\end{proof}

This proposition helps us to clarify the relation between the report-level result of Section~\ref{sec:bridge-main-claim} and the familiar language of real-valued quantities. The report map is the object that is directly computed by the experiment under the bridge principle. A real-valued function is obtained only when the finite reports can be organized into a coherent approximation scheme, so that when such a scheme is present, the concept of finite precision is not an obstacle but rather the mechanism by which the real-valued computation is expressed.

This distinction is important because laboratory uncertainty statements do not always function as deterministic error bounds. What I mean is that in many contexts, an uncertainty statement summarizes a measurement model, a coverage convention, or a statistical statement of dispersion \citep{JCGM1002008,JCGM1012008}.  In those cases, the report may not justify an assertion of the form \(|q-f(x)|\leq 2^{-n}\) for each individual run. The deterministic proposition above should therefore be read as the clean limiting case. Thus, stochastic experiments require a corresponding treatment in terms of output distributions, sampling procedures, and assumptions about the stability of those distributions. This extra ingredient of stochasticity will be introduced and treated in Section~\ref{sec:protocol-stochastic}.

Going back to the solubility example, which worried my chemist colleague enormously, allows us to illustrate the discussion so far more clearly. A refined solubility protocol may increase equilibration time, improve temperature control, repeat assays, or tighten calibration. These changes may support increasingly accurate reports, but only if the protocol itself specifies how refinement is to be interpreted and how the resulting reports are warranted. If the refinement structure is coherent, then the finite reports approximate a real-valued solubility function in the computable-analysis sense. If it is not coherent, the experiment still computes a finite report map, but the stronger claim that there is a unique real-valued solubility function has not yet been earned.

To summarise, no experiment has to print infinitely many digits in order to compute a real-valued quantity. What is required is a finite procedure that can answer finite accuracy requests in a controlled way. We have shown that computable analysis is the correct mathematical language to capture this feature, because it preserves the central conclusion of the paper while keeping the level of the claim precise and incorporating the realistic ingredient of finite-precision measurement. The experiment computes a finite report map. Under suitable refinement guarantees, that report map also gives effective access to a real-valued function.

\section{Protocol Dependence and Stochastic Qualification}
\label{sec:protocol-stochastic}
\label{sec:operational-measurands}
\label{sec:stochastic}
The main result is now in place: a fixed experiment is a finite halting procedure, which, under the physical Church--Turing bridge principle, computes a map from admissible inputs to reports. In this section I am going to add two extra ingredients which are important in scientific practice but do not change the logic of the argument. The first ingredient is that the map is fixed by the protocol. The second one is that, in many cases, the output of the protocol is stochastic. Protocol dependence tells us which map is computed while stochasticity changes the form of the output object to a distribution. Neither point undermines the existence claim we have established before.

It is useful at this stage to denote the protocol-relative target explicitly. For a fixed protocol $P$, background conditions $C$, and reporting rule $R$, I will call the corresponding target an operational measurand, which is, rather simply, a compact way of saying that the quantity under discussion is fixed by the way the experiment is specified and reported. This is consistent with the metrological idea that specifying a measurand is part of the measurement task, and that the measurement result depends on the model and the reporting conventions used to obtain it \citep{JCGM2002012,JCGM1002008,JCGMGUM62020}. In analytical measurement, this dependence is especially explicit in method-defined cases, where the measurand is defined by reference to the procedure used to obtain it \citep{BarwickEurachemTAM2023}.

Thus the notation $F=R\circ E$ should be read more fully as
\begin{equation}
    F_{P,C,R}=R\circ E_{P,C}\,.
\end{equation}

Changing $P$, $C$ or $R$ may change the computed map. Thus, more precisely, once these specifications are fixed, the corresponding experimental procedure computes a map, and the relevant function exists at that level. This is not the end of the scientific task. In many contexts, scientists also want to know whether maps obtained under different protocols, background conditions, or reporting rules can be treated as measurements of the same underlying quantity. Such protocol-independent or protocol-robust claims require additional arguments from calibration, correction models, interlaboratory comparison, robustness, theoretical modeling, coherence testing, or convergence under refinement \citep{JCGM1002008,Tal2016,Bokulich2020CalibrationCoherence}. They are important scientific achievements, but they should not be built into the existence question from the beginning.

The solubility example makes this point concrete. A solubility protocol that fixes one solvent, one temperature regime, one equilibration criterion, and one reporting convention computes one report map. A protocol that changes the solvent, the endpoint criterion, the separation step, or the analytic readout may compute a different report map. It is then misleading to say, without further argument, that one procedure measures the real solubility while the other merely introduces noise. What is true is more precise. Each sufficiently specified procedure fixes its own operational target. Whether two such targets can be treated as measurements of the same underlying quantity is a further and deep scientific question. Lack of immediate protocol independence does not imply lack of function. It tells us that the function whose existence has been secured is the one computed by the fixed experimental procedure.

The second ingredient concerns stochasticity. Up to now we have considered the cleanest case corresponding to deterministic maps: for a fixed admissible input $x$, and perhaps for a refinement setting $k$, the protocol returns a report $F(x,k)$. Many experiments, however, do not behave like this. Repeated executions under the same protocol and the same background conditions may return different finite records and hence different reports. This variability may come from instrument noise, sampling variability, microscopic initial conditions, preparation differences, or other sources that the protocol does not eliminate. In such cases, reproducibility does not mean identical reports in every run but rather means stability of the report-generating distribution under repeated execution.

The mathematical object is then not a single report but a distribution over possible reports. Let $\omega$ denote the random element involved in one execution of the protocol. The execution stage may be written as $E(x,k,\omega)$ and the reported output as:
\begin{equation}
    Y_{x,k}=R(E(x,k,\omega))\,.
\end{equation}
For each admissible pair $(x,k)$, this induces a probability distribution $\mu_{x,k}$ on the finite report space $\mathrm{Rep}_{\mathrm{fin}}$ defined by
\begin{equation}
    \mu_{x,k}(A)=\text{Prob}\left(Y_{x,k}\in A\right)\,,\qquad A\subseteq \mathrm{Rep}_{\mathrm{fin}}\,.
\end{equation}
The stochastic target is therefore the family
\begin{equation}
    (x,k)\longmapsto \mu_{x,k}\,.
\end{equation}
Point estimates, confidence intervals, posterior summaries, or sample means are then obtained by applying reporting conventions or summary rules to this distribution. 

This formulation is consistent with ordinary measurement practice. In the GUM framework, uncertainty may be represented by probability distributions assigned to input quantities and propagated through a measurement model \citep{JCGM1002008}. GUM Supplement~1, for instance, makes this sampling perspective explicit through Monte Carlo propagation of distributions \citep{JCGM1012008}. The important point for the present paper is not that every laboratory report must be written in this language but rather that when repeated execution naturally produces variable reports, the correct operational target is distributional rather than point-valued.

In this more natural scenario for an experiment, the computability claim should then be stated at this level of description. In the stochastic case, the bridge principle is not  used to obtain a deterministic function returning one fixed report. It is used to obtain a computable sampling procedure for the induced report distribution. More explicitly, under the stochastic reading of the same restricted bridge principle, there is a procedure $S_{P,C,R}$ such that, for admissible $(x,k)$, and for a source of randomization represented by $\omega$, we have that
\begin{equation}
    S_{P,C,R}(x,k,\omega)\sim \mu_{x,k}\,.
\end{equation}

This means that the algorithmic object is a sampler for the report distribution. The random input $\omega$ is not a trick that hides the problem. It marks the fact that the protocol itself is being represented as a randomized effective procedure rather than as a deterministic one.

This is the stochastic analogue of the deterministic report map. In the deterministic case, the protocol computes
\begin{equation}
    F_{P,C,R}(x,k)\in \mathrm{Rep}_{\mathrm{fin}}\,.
\end{equation}
In the stochastic case, the protocol computes, in the operational sense, a sampling procedure for
\begin{equation}
    \mu_{x,k}\in \mathcal{P}(\mathrm{Rep}_{\mathrm{fin}})\,.
\end{equation}
The existential conclusion remains the same since the experiment does not fail to define a function because its individual runs vary. It defines a stochastic map from admissible inputs to report distributions, or equivalently a randomized procedure that generates reports according to those distributions.

There is one important restriction. Sampling access to a distribution is not the same thing as symbolic access to all of its probabilities, moments, densities, or expectations. Repeated execution gives samples from the induced report distribution. It does not automatically give closed-form probabilities or arbitrary integrals of test functions. In computable analysis and computable probability, different representations of a measure support different kinds of computation \citep{Weihrauch2000ComputableAnalysis,HoyrupRojas2009ComputabilityMeasures}. The representation closest to experimental practice is sampling. Stronger representations may be available in particular cases, but they require further assumptions.

This also clarifies the relation between stochasticity and the real-valued idealizations discussed in Section~\ref{sec:computable-analysis}. A deterministic real-valued target requires a refinement structure that yields approximations to a single value. A stochastic target requires a refinement structure that stabilizes a family of report distributions. If those distributions converge effectively to a limiting distribution in a chosen representation, then one may obtain a stronger distribution-valued computability claim. If instead the protocol only stabilizes distributions at a declared operational resolution, then the correct object remains the protocol-level family $\mu_{x,k}$. This is not a defect but rather the correct expression of what the experiment actually computes.

The solubility example again illustrates this point clearly. Repeated executions of a solubility protocol may not return the same reported concentration. Differences in sample handling, equilibration judgment, instrument response, and assay variability may produce a spread of reports even when the protocol is followed correctly. If that spread is stable under the stated conditions, then the protocol fixes a report distribution. A reported mean, interval, or uncertainty statement is a summary of that distribution according to $R$. The existence claim is not defeated by variability as the computed object becomes a stochastic one.

Thus, protocol dependence and stochasticity are not objections against the paper's existence claim but rather add realistic ingredients to it. Protocol dependence specifies the experimental map whose existence is at issue while stochasticity specifies whether the output of that map is a finite report, a distribution of reports, or a sampler for such a distribution. Therefore the main argument of the existence claim survives unchanged and becomes more precise. A fixed experiment is a finite halting procedure which, under the bridge principle, is computable. In the deterministic case it computes a report map, while in the stochastic case it computes a sampling procedure for a report distribution.

In the next section we move to state the limits of our claims. Obviously, stronger claims about a single protocol-independent real-valued function require additional coherence, calibration, and invariance assumptions.

\section{Limits and Non-Goals}
\label{sec:limitations}

We have now reached the existence claim in its intended form, so it is worth stating it once more. A fixed experiment is a finite halting procedure. Under the physical Church--Turing bridge principle, that procedure computes a map from admissible inputs to reports. In the deterministic case this gives a computable report map, and under suitable refinement guarantees it gives effective access to a real-valued function. On the other hand, in stochastic cases it gives a computable sampling procedure for the induced report distribution. These conclusions are now enough to answer my chemist colleague's existence worry that motivated me to write this paper. They are not meant to settle every question about physics, measurement, or computation. It is therefore useful to enumerate clearly what has not been claimed.

First, the paper does not prove nor does it claim to prove a universal physical Church--Turing thesis, even a restricted one. The bridge principle here is restricted to finitely specified, reproducible laboratory protocols with admissible inputs, stopping rules, and finite reporting conventions. It is not a claim that every physical process is Turing-computable, nor that arbitrary physical evolution can always be represented as an algorithm. The bridge principle is narrower: ordinary experiments have a procedural structure that makes them appropriate objects for computability analysis and classification once the bridge principle is accepted. This latter statement is important. Bluntly, the arguments presented here are conditional on that principle rather than pretending to derive it from pure mathematics \citep{Turing1937ComputableNumbers,Copeland2020ChurchTuringSEP,Piccinini2011PhysicalCTT}. If one day science were to produce a reproducible laboratory protocol whose input-output behavior systematically exceeded Turing computability, then the bridge principle adopted here would fail for that case. The present argument does not rule out that possibility by definition. It only says what follows for experimental procedures falling under the stated bridge principle.

Second, the paper does not claim that every protocol computes the same protocol-independent quantity. On the contrary, Section~\ref{sec:protocol-stochastic} emphasized that changing the protocol, the background conditions, or the reporting rule may change the computed map. This is not a weakness in the existence claim. It is the correct location of the claim. The function whose existence has been secured is the one computed by the fixed experimental procedure. Whether different procedures can be treated as measurements of the same underlying quantity is a further and important scientific question. That further question is  central to science. We often want different instruments, laboratories, sample preparations, or reporting conventions to converge on a common quantity. When they do, the result is important. But such agreement has to be earned through calibration, correction models, robustness, interlaboratory comparison, theoretical explanation, or convergence under refinement \citep{JCGM1002008,Tal2016}. Protocol independence is therefore not a precondition for the existence of a function. It is a higher-level achievement about relations among functions computed by different protocols.

Third, the paper does not claim that every finite report map automatically yields a unique exact real-valued function. The report map is primary because it is what the experiment directly computes. A real-valued idealization requires an additional refinement structure showing how finite reports approximate a target value to requested accuracy. Without such a structure, there may still be a perfectly good computed report map, but the stronger claim that this map gives effective access to a unique real-valued function has not been justified. This is why Section~\ref{sec:computable-analysis} treated real-valued computability as an additional layer rather than as the starting point.

Fourth, a stochastic experiment does not fail to define a function because repeated runs vary. Rather, the computed object is different. It may be a sampler for a report distribution, or a map from admissible inputs to distributions over reports. What is not automatic is symbolic access to all probabilities, expectations, moments, or densities associated with that distribution. Sampling access is already a meaningful computability claim, but stronger distributional claims require stronger representations and additional assumptions \citep{Weihrauch2000ComputableAnalysis,HoyrupRojas2009ComputabilityMeasures}.

Finally, the paper does not make a complexity claim. To say that a report map is computable is not to say that it is efficiently computable, practically feasible, or easy to approximate. A laboratory procedure may halt only after a very long time, require expensive apparatus, or demand a refinement schedule whose cost grows rapidly with the requested accuracy. Those issues matter for scientific practice, but they belong to complexity, resource analysis, and experimental design. They do not affect the basic existence claim made here. A map may exist and be computable while still being difficult to evaluate in practice.

The limits just stated are not qualifications added to weaken the result. They are part of what makes the result presented here precise. The paper does not need to show that all physical processes compute, that all protocols converge to one universal quantity, or that all computable maps are efficiently accessible. Its claim is sharper than that: a reproducible experiment is a finite halting procedure. Under the stated bridge principle, it computes a map. Therefore the relevant function exists at the level fixed by the protocol, the background conditions, and the reporting rule. The remaining work is to explain that function, compare it with functions obtained under other protocols, and determine when stronger claims about real-valued targets or protocol-independent quantities are warranted.

\section{Conclusion}
\label{sec:conclusion}
\label{sec:payoff}

This paper began with a simple question that came from experimental practice rather than from an abstract puzzle. Does a quantity such as solubility define a function? The answer defended here is yes, once the question is asked at the right level. A reproducible experiment is a finite procedure for obtaining a report. It takes admissible inputs, runs through a specified sequence of operations, halts according to a stopping rule, and returns an output through a reporting convention. Under the physical Church--Turing bridge principle stated in Section~\ref{sec:bridge-main-claim}, such a procedure computes a map. For that reason, the corresponding function exists at the level fixed by the protocol, the background conditions, and the reporting rule.

This conclusion does not require that the experiment print an exact real number. The report produced in the laboratory is finite. It may be a rational value, an interval, a finite string, an uncertainty statement, or a more elaborate finite record. Computable analysis is precisely the language that allows us to connect such finite reports with real-valued quantities when a valid refinement structure is available \citep{Weihrauch2000ComputableAnalysis,BrattkaHertlingWeihrauch2008Tutorial}. The important point is not exact access to infinitely many digits in one step. The important point is whether the procedure can answer finite accuracy requests in a controlled way. When it can, the finite report map gives effective access to a real-valued function. When it cannot, the finite report map remains the object that the experiment directly computes.

The same conclusion survives protocol dependence and stochasticity. Changing the protocol, the background conditions, or the reporting rule may change the computed map. This is not a failure of the argument. It tells us which function has been secured. A fixed protocol computes its own operational map, and the question whether different protocols should be treated as measurements of the same underlying quantity is a further scientific question. In stochastic settings, repeated runs may generate a stable distribution of reports rather than a single deterministic output. This also does not destroy the existence claim. It changes the computed object from a point-valued report map to a sampler, or to a map into report distributions. The logic remains the same.

The point of the paper is therefore not to prove a universal physical Church--Turing thesis. It is also not to guarantee that all experimental protocols converge to a single protocol-independent real-valued quantity. Those would be much stronger claims, and the preceding section explained why they should not be built into the argument. The result here is sharper. Once an experiment is specified as a finite halting procedure, and once the bridge principle is accepted, the experiment computes a map. The existence of the relevant function follows from that computation.

This reframes the scientific problem. The central difficulty is not that experimentally defined quantities fail to exist because they depend on procedures. The central difficulty is to understand the functions that experiments compute. One must explain their structure, predict their values, compare them across protocols, and determine when different operational maps can be unified as measurements of the same underlying quantity. Such unification may require calibration, robustness arguments, correction models, theoretical explanation, or model selection \citep{JCGM1002008,Tal2016,Woodward2003MakingThingsHappen,Levins1966StrategyModelBuilding,Wimsatt1981RobustnessReliability,Akaike1974AIC,BurnhamAnderson2002ModelSelection}. These are substantive scientific achievements. They begin after the existence question has been put in its proper form.

The original worry of my chemist colleague can now be answered directly. If one looks for a solubility function that is already independent of solvent, temperature, equilibration criteria, sample preparation, and reporting convention, then the question is badly posed. But if one fixes the experimental conditions and asks what the experiment does, the answer is clear. The experiment is a finite procedure that halts and reports a value. Under the bridge principle, it computes a map. Therefore the solubility function, at that operational level, exists. The deeper scientific work is not to rescue that existence from nothing, but to explain the computed map and to understand how it relates to other maps obtained under other protocols.

\bmhead{Acknowledgements}
I would like to thank an anonymous chemist colleague and friend for a conversation over lunch in which he pressed a worry that initially sounded almost metaphysical but turned out to be methodologically revealing: whether a quantity such as solubility ``defines a function'' in any strict mathematical sense. My initial attempt to reassure him by treating an experiment as a finite sequence of steps taking a specified input situation to a recorded outcome quickly exposed a deeper issue, namely that the real question was not simply what one is entitled to call a function, but what the relevant object of analysis is in the first place. That conversation led me to the view developed here: that the right starting point is not a protocol-independent property function, but the operational map fixed by a protocol, background conditions, and a reporting convention. The present paper grew out of that disagreement and out of the attempt to reconcile the minimal set-theoretic notion of a function with the protocol dependence and finite precision that are constitutive of experimental practice.

\bibliography{biblio}

\end{document}